\documentclass[11pt,a4paper,normalheadings,headsepline,headinclude,bibtotoc,fleqn,review]{article}
\usepackage[latin1]{inputenc}
\usepackage{amsmath}
\usepackage{booktabs}
\usepackage{array}
\usepackage{amsthm}
\usepackage{dcolumn}
\usepackage{endnotes}
\usepackage{units}
\usepackage{marvosym}
\usepackage[lines=10]{geometry}
\usepackage{longtable}
\usepackage{rotating}
\usepackage{expdlist}
\usepackage{geometry}
\usepackage{floatrow}
\usepackage{pgfplots}
\pgfmathdeclarefunction{gauss}{3}{
	\pgfmathparse{1/(#3*sqrt(2*pi))*exp(-((#1-#2)^2)/(2*#3^2))}%
}

\pgfmathsetmacro\valueA{gauss(2.25, 2.25, 0.7)} 


\newenvironment{distributionsaxes}{
	\begin{axis}[
		domain=0:6, 
		samples=100, 
		ymin=0, 
		ymax=0.75, 
		axis lines=left, 
		xlabel=$\ell_S{,}\ell_L$, 
		ylabel=$\Phi_S{,}\Phi_L$, 
		every axis y label/.style={at=(current axis.above origin),anchor=north east}, 
		every axis x label/.style={at=(current axis.right of origin),anchor=north west}, 
		height=7cm, 
		width=8cm, 
		xtick=\empty, 
		ytick=\empty, 
		enlargelimits=false, 
		clip=false, 
	]
}{
	\end{axis}
}

\newenvironment{linesaxes}{
	\begin{axis} [
		height=7cm, 
		width=8cm, 
		ymin=0, 
		ymax=100, 
		xmin = 0, 
		xmax = 100, 
	]
}{
	\end{axis}
}

\usepackage{wrapfig}
\usepackage{expdlist}
\usepackage{amssymb}
\usepackage{gloss}
\usepackage{nomencl}
\usepackage{natbib}

\newtheorem{lemma}{Lemma}

\setlength{\nomlabelwidth}{.2\hsize}

\setlength{\nomitemsep}{-1\parsep}

    \makegloss

\oddsidemargin 0cm \textwidth 16.5cm \linespread{1.5} \topmargin
0cm \textheight 24cm \headheight 0cm \headsep 0cm
\setlength{\footskip}{10mm} \setlength{\headsep}{4mm}
\renewcommand{\thesection}{\arabic{section}}

\usepackage{fancyhdr} \cfoot{\thepage}
\rhead{\thesection} \chead{}

\begin{document}


\begin{center}\huge{On Market Design and Latency Arbitrage}\footnote{I thank Eric Budish and Al Roth for bringing the topics of market design and high-frequency trading to my attention. First draft August 2019.} \end{center}

\begin{center} \textit{Wolfgang Kuhle}\\ \textit{Zhejiang University, Hangzhou, China, e-mail wkuhle@gmx.de\\ MEA, Max Planck Institute for Social Law and Social Policy, Munich, Germany\\ VSE, Prague, Czech Republic}\end{center}

\noindent\emph{\textbf{Abstract:} We argue that contemporary stock market designs are, due to traders' inability to fully express their preferences over the execution times of their orders, prone to latency arbitrage. In turn, we propose a new order type which allows traders to specify the time at which their orders are executed after reaching the exchange. Using this order type, traders can synchronize order executions across different exchanges, such that high-frequency traders, even if they operate at the speed of light, can no-longer engage in latency arbitrage.}\\
\textbf{Keywords: Market Design, High-frequency Trading, Latency Arbitrage, Law of one Price}\\
\textbf{JEL: D47}

\hspace{.5cm}

\section{Introduction}\label{Introduction}

Investors, who minimize the cost at which they acquire a given number of shares, have an incentive to break-up large orders into several smaller orders, which are then placed on different exchanges. Moreover, in order to avoid that a high frequency trader (HFT) can front-run some of their orders, investors have an incentive to place orders such that they are executed simultaneously across different exchanges. Such simultaneous executions are, however, difficult to implement in a world with random latencies.\footnote{That is, suppose an investor sends two buy orders to two different exchanges, and one of these orders, e.g. Order 1, reaches Exchange 1 some time before Order 2 reaches Exchange 2. This scenario allows a HFT, who detects the early execution of Order 1 on Exchange 1, to quickly buy on Exchange 2. In turn, the HFT can sell at a profit when the investor's Order 2 reaches Exchange 2. High frequency traders operate dedicated glass-fiber networks for this purpose. Such networks allow for (one way) latencies of roughly 4 milliseconds (ms) between Chicago and New York. At the same time, an investor, e.g. from Albany, faces a distribution of latencies: Albany-New York $(\mu=51ms,\sigma=28ms)$ and Albany-Chicago $(\mu=103ms,\sigma=25.7ms)$. That is, all orders sent from Albany to New York and to Chicago, which do not arrive within 4 ms of one-another, are subject to latency arbitrage. Using data from the NYSE and the CME, \citet{Bud15} show that such arbitrage opportunities are roughly worth 75 million USD annually for the trade in the SPY ETF alone.}   

In the present paper, we study latency arbitrage in a model where investors/traders buy and sell one homogenous asset on two geographically distinct exchanges. Trading is complicated by randomly varying latencies, and by the presence of high-frequency traders (HFTs), who enjoy lower latencies than all other market participants. Based on this model, we propose a new order type, which allows investors to specify the time at which their orders are executed after reaching the exchange. That is, we propose a market design, where traders can, in addition to choosing when to send orders, choose the time at which an order is executed after it reaches the exchange. In turn, we show that traders can use this order type to better synchronize order executions across exchanges, such that HFTs can no-longer engage in latency arbitrage.

\textsl{Related Literature:} \citet{Bud15} have recently argued that the competition for lower latencies among HFTs is ``a symptom of a flawed market design." In turn, to reduce the rents collected by HFTs, \citet{Bud15}, p. 1549, propose that exchanges should switch from ``continuous-time trading" to ``discrete-time trading." That is, \citet{Bud15} argue that exchanges should only execute orders at prescribed, discrete, points in time. Put differently, \citet{Bud15} propose to restrict traders' choices regarding the execution times of their trades. In the present paper we show that giving traders additional, rather than fewer, choice variables may also resolve the problem of latency arbitrage. Taking this view, ``discrete-time trading" may be an unnecessary constraint on the market place.


\citet{Bud15}, p. 1548, take the observation that correlations between the prices of homogenous assets, which trade on two different exchanges, break down at ``high-frequency time horizons" \emph{as an exogenous, empirical, fact}.\footnote{Put differently, \citet{Bud15} observe that the price of the SPY in Chicago is not perfectly correlated with the price of the SPY on the NYSE. Put yet differently, \citet{Bud15} show that the law of one price does not hold at very short time horizons. This phenomenon is (naturally) even more pronounced, e.g. \citet{Epp79}, in older data sets.} In turn, \citet{Bud15}, p. 1552, argue that ``[t]his correlation breakdown in turn leads to obvious mechanical arbitrage opportunities, available to whoever is fastest. For instance, at 1:51:39.590 PM, after the price of the ES [Chicago] has just jumped roughly 2.5 index points, the arbitrage opportunity is to buy SPY [NYSE] and sell ES [Chicago]." In the present paper, on the contrary, we start with a model where the break down in correlations results \textsl{endogenously} from the placement of large orders under random latency. That is, the observation ``[a]fter the price of the ES [Chicago] has just jumped roughly 2.5 index points, the arbitrage opportunity is to buy SPY [NYSE] and sell ES [Chicago]," has the following interpretation in our model: a trader, large enough to move the market by 2.5 index points, must have placed an order in a manner which creates an arbitrage opportunity. In turn, we show that creating such arbitrage opportunities is (i) costly for the trader and (ii) can be avoided if the trader uses the order type proposed here. That is, the arbitrage opportunities, upon which \citet{Bud15} build their argument for slowing markets, are no-longer present.

\citet{Bud15}, and \citet{Bud20} review\footnote{See also \citet{Sti14} and \citet{Lin17} for recent reviews on high-frequency trading, and \citet{Rot94,Rot02} for a broader market design perspective on the optimal time that markets open and close.} several alternative proposals aimed at reducing latency arbitrage. Many of these proposals suggest to Tobin-tax financial transactions, or to tax high frequency trading, or to tax low latency infrastructure. Other proposals argue for reductions in the speed at which orders are executed, or for reductions in the frequency with which markets open, or limits to the speed with which market participants can place/cancel orders. Yet different proposals suggest to introduce additional noise in the placement times of orders, to dilute the speed advantage of HFTs. Another branch of models suggest that fast traders should compete in a ``fast market," and that slow traders should trade in a ``slow market." These proposals have in common that they place additional restrictions on markets and market participants. The present paper thus offers an alternative perspective: we argue that latency arbitrage can be addressed by removing, rather than adding, to the restrictions that market participants face. 

\textsl{Organization:} Section \ref{Deterministic} studies a deterministic benchmark. Section \ref{Lat} introduces the latency friction, and shows that contemporary market designs, where orders are executed as soon as they reach the exchange, are prone to latency arbitrage; even if traders strategically delay the sending of orders. We also note that strategic order delay generates price distributions, which are in line with the empirical observations in \citet{Bud15}. Section \ref{Imp} proposes an order type, which helps traders to synchronize order executions across exchanges. Using recent latency data, Section \ref{Dis} illustrates the practical effectiveness of the order type that we propose. Section \ref{Con} concludes.

\section{Deterministic Benchmark}\label{Deterministic}

One asset is traded on two exchanges $m=L,S$. Each exchange has its own limit order book/excess demand function for the asset. The number/density of shares $f(P)$, which are on offer at each price $P$, differs across exchanges. To distinguish between the two exchanges, we assume that there is a large exchange $L$, which is more liquid than the smaller exchange $S$ in the sense that $f_L(P)>f_S(P)\forall P$.\footnote{The \citet{CME16}, p.3, estimates that its market for the SPY future is 7 times more liquid than the NYSE's market for the SPY ETF. The \citet{CME16}, p.3, also estimates that buying 100 Million worth of the S\&P 500 costs 1.25 basis points (BP) on the CME while the cost is 2 BP if the same amount of the SPY ETF is bought on the NYSE.} We also assume that, unless a large order is placed in a manner that brings prices into temporary disequilibrium, the market satisfies the law of one price $P_L=P_S=P_0$.\footnote{Appendix \ref{A1}, presents such a market place, consisting of two local markets/exchanges, each of which with a distinct (excess) demand function/limit-order book.}

A trader, who buys all shares offered for prices less or equal $P^*_m$ on exchange $m$, receives a quantity $X_m$:
\begin{eqnarray} X_m=\int_{P_0}^{P_m^*}f_m(P)dP, \quad m=L,S.  \end{eqnarray}
The cost of buying a bundle of stocks $X_m$ in market $m$ is thus:
\begin{eqnarray} E_m=\int_{P_0}^{P_m^*}Pf_m(P)dP, \quad m=L,S. \label{e00}\end{eqnarray}
To minimize the cost, of acquiring a given bundle of stocks $X$, the trader buys shares on both exchanges:
\begin{eqnarray} \min_{P^*_L,P^*_S}\Big\{\int_{P_0}^{P_L^*}Pf_L(P)dP+\int_{P_0}^{P_S^*}Pf_S(P)dP\Big\} \quad s.t. \quad X_L+X_S=X. \label{cost}\end{eqnarray} 
The first-order conditions to problem (\ref{cost}) imply that:
\begin{eqnarray} P^*_L=P^*_S=P^*.\label{FOC} \end{eqnarray}
Hence, we have:
\begin{lemma} Large traders split orders between both marketplaces such that the law of one price is not violated.\end{lemma}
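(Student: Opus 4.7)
The plan is to verify the first-order condition (\ref{FOC}) directly from the constrained minimization (\ref{cost}) by forming a Lagrangian, and then to argue that the equality of the two cutoff prices is exactly the statement of the law of one price in this setting.

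First I would rewrite the cost-minimization problem in Lagrangian form. Letting $\lambda$ denote the multiplier on the quantity constraint $X_L+X_S=X$, with $X_m=\int_{P_0}^{P_m^*}f_m(P)dP$, the Lagrangian reads
\begin{eqnarray}
\mathcal{L}(P_L^*,P_S^*,\lambda)=\sum_{m=L,S}\int_{P_0}^{P_m^*}Pf_m(P)dP-\lambda\Bigl(\sum_{m=L,S}\int_{P_0}^{P_m^*}f_m(P)dP-X\Bigr).
\end{eqnarray}
Applying Leibniz's rule (valid since $f_m$ enters the integrand continuously), differentiation with respect to each cutoff $P_m^*$ gives $P_m^*f_m(P_m^*)-\lambda f_m(P_m^*)=0$ for $m=L,S$. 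Since $f_m(P_m^*)>0$ by construction (the order books are assumed to offer strictly positive liquidity in the relevant price range), one may divide through to obtain $P_L^*=\lambda=P_S^*$, which is precisely (\ref{FOC}).

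Second, I would note that this is genuinely a minimum rather than a saddle or maximum: the objective is the integral of $P f_m(P)$ from $P_0$ upward, which is increasing and convex in $P_m^*$, while the constraint set traces out an increasing concave frontier in $(X_L,X_S)$-space. Hence the interior critical point found by the FOC is the unique cost-minimizer. The heterogeneity $f_L(P)>f_S(P)$ plays no role in pinning down the common cutoff price; it only determines how the total quantity $X$ is split, with the more liquid exchange absorbing the larger share $X_L>X_S$.

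Third, I would translate the mathematical conclusion $P_L^*=P_S^*$ into the economic statement of the lemma. Because $P_0$ is the common pre-trade price and $P_m^*$ is the marginal price at which the trader stops purchasing on exchange $m$, the equality of marginal execution prices across exchanges is exactly the law of one price in this deterministic, latency-free benchmark; any deviation would leave an unexploited cost saving by reallocating the marginal share from the exchange with the higher $P_m^*$ to the one with the lower. I do not foresee a genuine obstacle in this argument; the only technical point worth flagging is the implicit assumption that both $P_L^*$ and $P_S^*$ are interior, which holds as long as $X$ is small enough that neither order book is exhausted before price $\lambda$ is reached.
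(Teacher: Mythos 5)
Your proposal is correct and follows essentially the same route as the paper, which obtains the lemma directly from the first-order conditions of the cost-minimization problem (\ref{cost}) yielding $P_L^*=P_S^*$; you simply spell out the Lagrangian/Leibniz computation, the second-order and interiority caveats, and the economic translation that the paper leaves implicit. No gap.
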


\section{Random Latency}\label{Lat}
Suppose now that the trader communicates his orders via a telecommunication network with random latency. That is, orders to exchanges L and S may be delayed such that one order is executed earlier than the other. A high-frequency trader (HFT) can exploit this. That is, once he observes that, e.g., the price on the small exchange increases to $P^*_S$, he knows from equation (\ref{FOC}) that there is an order $X^*_L,P^*_L$ on its way to exchange $L$. He thus quickly buys the quantity $X^*_{L}$ on the large exchange, in order to sell at a higher price once the trader's delayed order arrives at exchange $L$.\footnote{If the HFT's order arrives after the order of the trader, the HFT gets no fill and cancels the order. That is, the HFT acts as a pure arbitrageur in our model.} This yields a rent for the HFT:
\begin{eqnarray} P_L^*X_L-\int_{P_0}^{P_L^*}Pf_L(P)dP>0, \end{eqnarray}
and adds to the cost at which an investor acquires stocks on exchange $L$.\footnote{Note that, even if the trader knew that his order is front run buy the HFT with probability one, he would still buy from the HFT: executing the whole order on just one exchange would increase the (short-run) price on that exchange beyond the price $P^*$ that he is paying when he buys from the HFT. Note also that a HFT, who acts as a pure arbitrageur, carries no inventory. That is, he will not buy more than what the investor is "willing" to buy from him.}

\begin{tikzpicture}
	\begin{linesaxes}
		\node (I) at (axis cs:20, 60) {I}; 
		\node (L) at (axis cs:70, 80) {L}; 
		\node (S) at (axis cs:60, 20) {S}; 
		
		\node[anchor=west] (HFT) at (axis cs:67,50) {HFT}; 
		
		\node[inner sep=0pt] (TS) at (axis cs:40, 40) {}; 
		
		\draw[->, very thick] (I) -- (L); 
		\draw[->, very thick] (L) -- (S); 
		\draw[->, very thick] (I) -- (TS); 
		\draw[dotted, very thick] (TS) -- (S); 
	\end{linesaxes}
\end{tikzpicture}
\begin{tikzpicture}
	\begin{linesaxes}
		\node (I) at (axis cs:20, 60) {I}; 
		\node (L) at (axis cs:70, 80) {L}; 
		\node (S) at (axis cs:60, 20) {S}; 
		
		\node[anchor=west] (HFT) at (axis cs:67,50) {HFT}; 
		
		\node[inner sep=0pt] (TL) at (axis cs:55, 74) {}; 
		
		\draw[->, very thick] (S) -- (L); 
		\draw[->, very thick] (I) -- (S); 
		\draw[->, very thick] (I) -- (TL); 
		\draw[dotted, very thick] (TL) -- (L); 
	\end{linesaxes}
\end{tikzpicture}

\begin{tikzpicture}
	\begin{linesaxes}
		\node (I) at (axis cs:20, 60) {I}; 
		\node (L) at (axis cs:70, 80) {L}; 
		\node (S) at (axis cs:60, 20) {S}; 
		
		\node[anchor=west] (HFT) at (axis cs:69,68) {HFT}; 
		
		\node[inner sep=0pt] (LS) at (axis cs:66.5, 60) {}; 
		
		\draw[->, very thick] (I) -- (L); 
		\draw[->, very thick] (I) -- (S); 
		\draw[->, very thick] (L) -- (LS); 
		\draw[dotted, very thick] (LS) -- (S); 
	\end{linesaxes}
\end{tikzpicture}
\begin{center} Diagram 1: Latency Arbitrage Triangle\end{center}

Diagram 1 summarizes the three possible outcomes faced by an investor $I$. Top left: the large exchange $L$ reveals the trade to the HFT. For this case, we denote the investor's total expenditures by $E_L$. Top right, the small exchange $S$ reveals the trade. For this case, we denote the investor's total expenditures by $E_S$. Bottom left, orders are executed simultaneously. In this case, the investor pays $E_{sim}$. 

Using our assumptions on market liquidity, namely $f_S(P)<F_L(P)$, we can rank these outcomes:
\begin{lemma} $E_{sim}<E_L<E_S$. \label{Expenditurelemma}\end{lemma}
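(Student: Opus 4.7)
The plan is to compute each of the three expenditures $E_{sim}$, $E_L$, $E_S$ explicitly in terms of the common limit price $P^*$ (as pinned down by the first-order condition~(\ref{FOC})) and the liquidity densities $f_L,f_S$, and then to read off the comparisons directly from the assumption $f_L(P)>f_S(P)$.

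First, I would note that in the simultaneous case neither of the investor's legs is front-run, so the cost is simply the sum of the two deterministic integrals,
\begin{eqnarray*} E_{sim}=\int_{P_0}^{P^*}Pf_L(P)dP+\int_{P_0}^{P^*}Pf_S(P)dP. \end{eqnarray*}
In the scenario where $L$ reveals first, the leg on $L$ still executes at book prices, while the HFT sweeps all of $S$'s liquidity up to $P^*$ and offers the resulting $X_S=\int_{P_0}^{P^*}f_S(P)dP$ shares back to the investor at the marginal price $P^*$, so
\begin{eqnarray*} E_L=\int_{P_0}^{P^*}Pf_L(P)dP+P^*X_S. \end{eqnarray*}
Symmetrically, when $S$ reveals first, the HFT sweeps $L$ up to $P^*$ and
\begin{eqnarray*} E_S=\int_{P_0}^{P^*}Pf_S(P)dP+P^*X_L,\qquad X_L=\int_{P_0}^{P^*}f_L(P)dP. \end{eqnarray*}

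Subtracting $E_{sim}$ from each, the HFT's extracted rent collapses to the single tidy expression
\begin{eqnarray*} E_m-E_{sim}=\int_{P_0}^{P^*}(P^*-P)f_{m'}(P)dP,\qquad m'\neq m, \end{eqnarray*}
which is strictly positive because the integrand is non-negative and strictly positive on a set of positive measure. This establishes $E_{sim}<E_L$ and $E_{sim}<E_S$. Finally, taking the difference
\begin{eqnarray*} E_S-E_L=\int_{P_0}^{P^*}(P^*-P)\bigl[f_L(P)-f_S(P)\bigr]dP>0 \end{eqnarray*}
uses the liquidity ordering $f_L(P)>f_S(P)$ to conclude $E_L<E_S$, chaining the three inequalities into $E_{sim}<E_L<E_S$.

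There is no real analytical obstacle; the only thing to be careful about is the bookkeeping of \emph{which} exchange is front-run in each of the two non-simultaneous scenarios. The economic content is that the HFT's rent on exchange $m'$ is the area between the horizontal line $P^*$ and the supply schedule $f_{m'}$ up to $P^*$, and a deeper book (larger $f$) mechanically yields a larger such area — hence it is cheaper for the investor to be betrayed by the thinner exchange $S$ than by the deeper exchange $L$.
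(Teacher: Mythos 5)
Your proof is correct and follows essentially the same route as the paper: you write down the same three expenditure expressions and compare them using $P<P^*$ on the integration range together with $f_L>f_S$. The only cosmetic difference is that you package the comparisons as rent differences $E_m-E_{sim}=\int_{P_0}^{P^*}(P^*-P)f_{m'}(P)dP$ rather than chaining the two inequalities directly, which is an equivalent rearrangement.
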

\begin{proof}
\begin{eqnarray} E_{sim}=\int_{P_0}^{P^*}Pf_L(P)dP+\int_{P_0}^{P^*}Pf_S(P)dP<\int_{P_0}^{P^*}Pf_L(P)dP+P^*X_S=E_L \end{eqnarray}
\begin{eqnarray} E_L=\int_{P_0}^{P^*}Pf_L(P)dP+P^*X_S<P^*X_L+\int_{P_0}^{P^*}Pf_S(P)dP=E_S \end{eqnarray}
\end{proof}

\subsection{Optimal Order Delay}\label{Delay}

In this section we study how investors can cope with random latencies under the contemporary market design, where orders are executed as soon as they arrive at the exchange. More precisely, we show that investors have an incentive to strategically delay orders such that trades are revealed more often on the liquid exchange. Put differently, the contemporary market design forces traders to tradeoff early execution on one exchange against early execution on the other exchange. Simultaneous execution of orders, however, is very difficult to implement. 
	
The observation that traders delay orders to the small exchange, to increase the probability with which trades are revealed early on the large exchange, is in line with the empirical evidence in \citet{Bud15}, p. 1569, who find that ``[t]he majority (88.56 percent) of arbitrage opportunities in our data set are initiated by a price change in ES [Chicago], with the remaining 11.44 percent initiated by a price change in SPY [NYSE]."\footnote{\citet{CME16}, p. 3, discuss that the Chicago market for the S\&P 500 is significantly more liquid than that of the NYSE.} Moreover, they remark that this ``is consistent with the practitioner perception that the ES [Chicago] market is the center for price discovery in the S\&P 500 index." Unlike \citet{Bud15}, however, we do not treat the early executions in Chicago as an exogenous empirical fact. Early executions in Chicago are an endogenous feature of our model. 

In turn, with these observations in place, we present a simple solution to the problem of latency arbitrage in Section \ref{Imp}.

We denote order delays to the small exchange, relative to the order to the large exchange, by $\delta\in\mathcal{R}$.\footnote{That is, $\delta=10$ means that the order to exchange $S$ is delayed by 10 milliseconds (ms), and $\delta=-20$ means that the order to the large exchange is sent 20 ms after the order to exchange $S$ was sent.} To minimize the expected cost of purchasing a given quantity of the asset he chooses $\delta$ such that:
\begin{eqnarray} \min_{\delta}\Big\{\pi_{sim}(\delta,H)E_{sim}+\pi_L(\delta,H)E_L+\pi_S(\delta,H)E_S\Big\}, \quad \pi_{sim}+\pi_L+\pi_S=1. \label{Delay1}\end{eqnarray}Where $\pi_{sim},\pi_L,\pi_S$ are the probabilities with which orders are executed simultaneously, or revealed on the large exchange or on the small exchange, respectively. To evaluate (\ref{Delay1}), we first work with general probabilities. In turn, we assume that latencies are normally distributed, and solve for traders' optimal order delay explicitly.

Regarding probabilities we assume that
$\pi_{L,\delta}:=\frac{\partial\pi_{L}}{\partial \delta}\geq 0$,
$\pi_{S,\delta}:=\frac{\partial\pi_{S}}{\partial \delta}\leq 0$.
Since $\pi_{sim}+\pi_L+\pi_S=1$, we have
$\pi_{sim,\delta}:=\frac{\partial\pi_{sim}}{\partial
\delta}\gtreqqless 0$. The first order condition
to problem (\ref{Delay1}) is:
\begin{eqnarray} \pi_{sim,\delta}(E_{sim}-E_L)=\pi_{S,\delta}(E_{L}-E_{S}), \label{FOCa} \end{eqnarray}
respectively
\begin{eqnarray} \pi_{L,\delta}(E_L-E_{sim})+\pi_{S,\delta}(E_{S}-E_{sim})=0. \label{FOCb} \end{eqnarray}
We thus have\footnote{If latency has compact support, there exists an interior optimal delay $\delta^*$ as long as $H>0$.}
\begin{lemma} At an interior solution $\delta^*$, we have $\pi_{sim,\delta}<0$ and $|\pi_{L,\delta}|>|\pi_{S,\delta}|$.\label{lemgen}\end{lemma}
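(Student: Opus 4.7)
The plan is to derive both inequalities directly from the first-order conditions (\ref{FOCa}) and (\ref{FOCb}) by plugging in the cost ranking $E_{sim}<E_L<E_S$ from Lemma \ref{Expenditurelemma}, together with the sign assumptions $\pi_{L,\delta}\geq 0$ and $\pi_{S,\delta}\leq 0$ stated just before the lemma. Nothing more is needed; the argument is essentially a two-line sign chase, so the goal is mainly to organize it cleanly.

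First I would handle the sign of $\pi_{sim,\delta}$ using (\ref{FOCa}). By Lemma \ref{Expenditurelemma} both factors $E_{sim}-E_L$ and $E_L-E_S$ are strictly negative, and by assumption $\pi_{S,\delta}\leq 0$. At an \emph{interior} optimum we may assume the strict inequality $\pi_{S,\delta}<0$ (otherwise $\pi_L$ could be increased and $\pi_S$ left unchanged, pushing probability mass from the worst outcome $E_S$ to a weakly better outcome, which contradicts optimality). Then the right-hand side of (\ref{FOCa}) is strictly positive, forcing the left-hand side to be strictly positive as well, and since the factor $E_{sim}-E_L<0$ this yields $\pi_{sim,\delta}<0$.

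Next I would establish $|\pi_{L,\delta}|>|\pi_{S,\delta}|$ from (\ref{FOCb}). Rewriting it as
\begin{eqnarray} \pi_{L,\delta}(E_L-E_{sim})=(-\pi_{S,\delta})(E_S-E_{sim}), \end{eqnarray}
both sides now involve only non-negative quantities. Since $E_S-E_{sim}>E_L-E_{sim}>0$ by Lemma \ref{Expenditurelemma}, dividing gives
\begin{eqnarray} \frac{\pi_{L,\delta}}{-\pi_{S,\delta}}=\frac{E_S-E_{sim}}{E_L-E_{sim}}>1, \end{eqnarray}
which is the desired inequality once one reads $\pi_{L,\delta}=|\pi_{L,\delta}|$ and $-\pi_{S,\delta}=|\pi_{S,\delta}|$.

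The only delicate point, and what I would treat as the main obstacle, is justifying that at an interior optimum the sign assumptions hold \emph{strictly} rather than with equality, so that the divisions above are legitimate. I would handle this by arguing that if either $\pi_{L,\delta}=0$ or $\pi_{S,\delta}=0$ at $\delta^*$, then (\ref{FOCb}) together with $E_L>E_{sim}$ and $E_S>E_{sim}$ forces the other to vanish as well, meaning the trader cannot affect the probability distribution by marginally varying $\delta$. This is a degenerate boundary case that one can rule out under mild assumptions on the latency distribution (e.g., full support around $\delta^*$), which is exactly the setting in which the forthcoming normal-latency specification operates.
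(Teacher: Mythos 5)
Your proof is correct and follows essentially the same route as the paper's own argument: a sign chase combining the first-order conditions (\ref{FOCa}) and (\ref{FOCb}) with the expenditure ranking of Lemma \ref{Expenditurelemma}. You additionally flag (and reasonably dispose of) the degenerate possibility $\pi_{L,\delta}=\pi_{S,\delta}=0$ at the critical point, which the paper's two-line proof silently ignores, so your version is if anything slightly more careful.
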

\begin{proof} $\pi_{sim,\delta}<0$ follows from (\ref{FOCa}) and Lemma \ref{Expenditurelemma}. $|\pi_{L,\delta}|>|\pi_{S,\delta}|$ follows from (\ref{FOCb}) and Lemma \ref{Expenditurelemma}.  \end{proof}
Lemma \ref{lemgen} indicates that traders do not maximize the probability of simultaneous execution. Instead, they lean towards execution on the large exchange as illustrated in Diagram 2. In Appendix \ref{normaldelay} we assume that latencies are normally distributed, and solve explicitly for $\delta^*$.

\begin{tikzpicture}
	\begin{distributionsaxes}
		\addplot [very thick, red] {gauss(x, 2.25, 0.7)}; 
		\addplot [very thick, blue] {gauss(x, 3.25, 0.7)}; 
		
		\draw [very thick, dotted, red] (axis cs:2.25,0) -- (axis cs:2.25, \valueA); 
		\draw [very thick, dotted, blue] (axis cs:3.25,0) -- (axis cs:3.25, \valueA); 
		
		\node[below, red] at (axis cs:3.7, 0.2) {$\Phi_S$}; 
		\node[below, blue] at (axis cs:4.7, 0.2) {$\Phi_L$}; 
		
		\node[below, red] at (axis cs:2.25, 0) {$\mu_S$}; 
		\node[below, blue] at (axis cs:3.25, 0) {$\mu_L$}; 
	\end{distributionsaxes}
\end{tikzpicture}
\begin{tikzpicture}
	\begin{distributionsaxes}
		\addplot [very thick, red] {gauss(x, 4, 0.7)}; 
		\addplot [very thick, blue] {gauss(x, 2.25, 0.7)}; 
		
		\draw [very thick, dotted, blue] (axis cs:2.25,0) -- (axis cs:2.25, \valueA); 
		\draw [very thick, dotted, red] (axis cs:4,0) -- (axis cs:4, \valueA); 
		
		\node[below, red] at (axis cs:5.45, 0.2) {$\Phi_S$}; 
		\node[below, blue] at (axis cs:3.7, 0.2) {$\Phi_L$}; 
		
		\node[below, red] at (axis cs:4, 0) {$\mu_S$+$\delta$}; 
		\node[below, blue] at (axis cs:2.25, 0) {$\mu_L$}; 
	\end{distributionsaxes}
\end{tikzpicture}
\begin{center} Diagram 2: Delaying Orders. Traders have an incentive to delay orders to the small exchange. The optimal delay $\delta^*$ does not maximize the probability of simultaneous execution.\end{center}

\section{Synchronized Order Placement}\label{Imp}

Suppose now that orders can be accompanied by a time identifier, which specifies the exact time $T$ at which the order is added to the exchange's order-book/executed. That is, orders are sent out to the exchanges at time $t=0$, with an identifier $T_m>0, m=L,S$, indicating the time at which the exchange adds the respective order to its order book, i.e. the time when the trade is executed. That is, a trade is implemented via the following time line:
\begin{enumerate}
    \item Orders are sent to exchanges. In addition to price and quantity, these orders also specify the exact time of execution/addition to the limit order book.
    \item Once orders arrive at the respective exchanges, they are not executed/placed until the specified placement time is reached. Exchanges are not allowed to publish the receipt of these orders until the placement time has been reached. 
    \item If an order arrives at the exchange after the desired placement time, it is placed immediately.
\end{enumerate}

\begin{lemma} Under the new order type, traders can choose placement times such that simultaneous execution is ensured. \end{lemma}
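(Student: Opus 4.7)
The plan is to argue constructively: exhibit a choice of placement times $T_L^*$ and $T_S^*$ for which the mechanism defined in steps 1--3 above forces both orders onto the two books at the same physical instant. Let $\tau_L$ and $\tau_S$ denote the (random) one-way latencies faced by the investor when sending orders to $L$ and $S$ respectively, and let $H$ denote an upper bound on their support (as in the ``compact support'' assumption already used in Lemma~\ref{lemgen}; the normal case is treated separately below). I would set $T_L^*=T_S^*=T$ for any common $T\geq H$.

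Next I would verify that this choice meets the goal. Since $\tau_m\leq H\leq T$ for $m=L,S$, both orders arrive at their respective exchanges no later than their specified placement times. By rule~2, neither exchange executes the order or publishes its receipt before $T$; by rule~3, the fallback (immediate execution) is never triggered because no order is late. Hence at the single instant $T$ both orders are added to their respective books, which is precisely the definition of $\pi_{sim}=1$. Because no information about the order is released before $T$, the HFT observes nothing that would identify the pending trade and cannot pre-position at either venue, so the expenditure realized by the trader is $E_{sim}$, the minimum among the three outcomes of Lemma~\ref{Expenditurelemma}.

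For the case of unbounded latency (e.g.\ normal, as in Appendix~\ref{normaldelay}), I would note that the same construction delivers an approximate version: for any $\varepsilon>0$ choose $T$ so large that $\Pr(\tau_L\leq T,\,\tau_S\leq T)\geq 1-\varepsilon$. Synchronous execution then occurs with probability at least $1-\varepsilon$, and letting $T\to\infty$ pushes $\pi_{sim}\to 1$, so simultaneous execution can be approximated arbitrarily closely. The only potential obstacle I see is whether the trader is willing to accept the delay $T$ until execution, but the lemma statement concerns feasibility of synchronization, not execution speed, so this does not affect the proof. The argument is therefore a direct verification from the three rules defining the new order type, with no need to revisit the first-order conditions of Section~\ref{Delay}.
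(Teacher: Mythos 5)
Your proof is correct and, in the unbounded-latency case, is essentially the paper's own argument: you bound $\pi_{sim}$ from below by the probability that both orders arrive before the placement deadline and let $T\rightarrow\infty$, just as the paper does in (\ref{sim})--(\ref{simb}); the only differences are that you use the window $T$ instead of $T+H$ and a joint probability instead of a product of marginals, both of which are harmless since your event (both orders already at the exchanges at time $T$, hence both executed exactly at $T$) is contained in the paper's notion of simultaneity (executions within $H$ of one another, which is all that is needed to defeat the HFT). Your compact-support branch is a genuine, if modest, addition: it delivers the lemma's literal wording (simultaneity \emph{ensured}, not merely approached in the limit), and it is precisely the logic of the paper's calibration footnote where $T$ is set to the maximum observed latency (Kampala, Knoxville). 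One point needs fixing, though: do not call the support bound $H$. In the paper $H$ is the HFT's one-way delivery time (roughly 4 ms), and the compact-support remark in the footnote preceding Lemma \ref{lemgen} does not introduce any latency bound by that name; read with the paper's notation, your prescription ``choose any common $T\geq H$'' would mean $T\geq 4$ ms, which ensures nothing, since investor latencies are an order of magnitude larger. Rename the bound (say $\bar{\ell}$ with $P(l_m\leq\bar{\ell})=1$ for $m=L,S$) and your construction stands as written.
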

\begin{proof}
We note that:
\begin{eqnarray} \pi_{sim}&\geq&P(l_S\leq T+H)P(l_L\leq T+H)+P(|l_S-l_L|\leq H)(1-P(l_S\leq T+H)P(l_L\leq T+H))\nonumber\\
&\geq& P(l_S\leq T+H)P(l_L\leq T+H) \label{sim}\end{eqnarray}
\begin{eqnarray}\pi_{sim}=1-\pi_L-\pi_S \label{simb}\end{eqnarray}
Equations (\ref{sim}) with (\ref{simb}), and the fact that $\pi_L,\pi_S\geq 0$ and $\lim_{T\rightarrow\infty}P(l_S\leq T+H)P(l_L\leq T+H)=1$, imply: $\lim_{T\rightarrow\infty}\pi_{sim}=1,$ $\lim_{T\rightarrow\infty}\pi_L=0$, and $\lim_{T\rightarrow\infty}\pi_S=0$.\end{proof}

That is, simultaneous execution can be ensured via the extended order type: instead of delaying messages, to tradeoff the probability of early execution on one exchange with the probability of early execution on the other. Put differently, increases in placement time $T$ simultaneously reduce the probabilities $\pi_L$ and $\pi_S$ with which latency arbitrage occurs, and unambiguously increase the probability of simultaneous order placement.\footnote{Using latency data by \citet{won21}, 27.05.2021 at 9:22 GMT, we note that the present order type can ensure simultaneous order placement, even for very bad connections. For example, the \textsl{mean} latencies from Kampala to Manhattan and Chicago are both roughly 440 ms. The \textsl{highest} observed latency from Kampala to Manhattan (latencies to, e.g. servers in New Jersey, are similar) is 640 ms. The \textsl{highest} latency from Kampala to Chicago was 671 ms. Hence, a trader from Kampala can ensure simultaneous execution of his orders by setting $T=671$ ms, i.e. 0.67 seconds. This means that the mean delay necessary to ensure simultaneous execution is roughly 0.2 seconds. Within the US, we find that Knoxville, Tennessee, puts up the highest latency numbers (the maximum observed latencies to Manhattan and Chicago are 70 ms and 80 ms respectively). Accordingly, Knoxville based traders can ensure simultaneous execution of orders by setting T=0.08 seconds. All other US based traders, who enjoy a better connection, can ensure simultaneous execution by choosing even lower values for $T$. Likewise, traders from London and Frankfurt, could choose $T=80$ ms and $T=95$ ms respectively, to ensure simultaneous order execution.}

Diagram 3 illustrates how choosing execution time $T$ increases the probability of simultaneous order execution. The shaded area on the left indicates the probability of simultaneous order placement when traders can specify execution time $T$, conditional on a latency realization $\tilde{l_S}.$ The shaded area on the right indicates the probability of simultaneous order placement when traders cannot specify execution time $T$, conditional on a latency realization $\tilde{l_S}.$

\begin{tikzpicture}
	\begin{distributionsaxes}
		\addplot [fill=blue!20, draw=none, domain=0:4.75] {gauss(x, 3.25, 0.7)} \closedcycle; 
		
		\addplot [very thick, red] {gauss(x, 2.25, 0.7)}; 
		\addplot [very thick, blue] {gauss(x, 3.25, 0.7)}; 
		
		\node[below, red] at (axis cs:3.5, 0.3) {$\Phi_S$}; 
		\node[below, blue] at (axis cs:4.5, 0.3) {$\Phi_L$}; 
		
		\node[below, red] at (axis cs:2.25, 0) {$\tilde{\ell}_S$}; 
		\draw [very thick, red, yshift=-3pt] (axis cs:2.25, 0) -- (axis cs:2.25, 0.03); 
		\node[below, blue] at (axis cs:4.75, 0) {$T$+$H$}; 
		
		\draw [very thick, blue] (axis cs:4.75,0) -- (axis cs:4.75, \valueA); 
	\end{distributionsaxes}
\end{tikzpicture}
\begin{tikzpicture}
	\begin{distributionsaxes}
		\addplot [fill=blue!20, draw=none, domain=1.75:2.5] {gauss(x, 3.25, 0.7)} \closedcycle; 
		
		\addplot [very thick, red] {gauss(x, 2.25, 0.7)}; 
		\addplot [very thick, blue] {gauss(x, 3.25, 0.7)}; 
		
		\node[below, red] at (axis cs:3.5, 0.3) {$\Phi_S$}; 
		\node[below, blue] at (axis cs:4.5, 0.3) {$\Phi_L$}; 
		
		\node[below, red] at (axis cs:2.125, 0) {$\tilde{\ell}_S$}; 
		\draw [very thick, red, yshift=-3pt] (axis cs:2.125, 0) -- (axis cs:2.125, 0.03); 
		
		\draw [yshift=0.25cm, latex-latex, red](axis cs:1.75,0) -- node [fill=blue!20] {\footnotesize $2H$} (axis cs:2.5,0); 
	\end{distributionsaxes}
\end{tikzpicture}
\begin{center} Diagram 3: Probability of Synchronized Order Placement: the shaded areas illustrate the probability of simultaneous order placement, conditional on a latency $\tilde{l}_S$. The probability of synchronized order placement (on the left), where traders can choose the execution time $T$ is much larger than under the current market design (on the right), where orders are executed immediately after reaching the exchange.\end{center}

\subsection{Calibration}\label{Dis}

To illustrate the advantage of the order type proposed here, let us consider the case of an Albany (New York State) based investor, who trades the S\&P 500 in New York City and Chicago:

\begin{enumerate}
    \item The HFT's one-way Chicago to New York time is roughly H=4 ms

    \item Latencies\footnote{We use latency data provided by \citet{won21}, 27.05.2021, 9:22 GMT.} are distributed: Albany-NY $(\mu_S=51,\sigma_S=28)$ and Albany-Chicago $(\mu_L=103,\sigma_L=25.7)$. 

    \item Relative excess cost $\frac{(E_L-E_{sim})}{(E_{S}-E_{sim})}$: \citet{CME16} estimate the price impact of placing 100 million SPY order as 1,25 BP. Placing the same order on the NYSE has an estimated impact of 2BP. Using the the linear model in Appendix \ref{A1}, we thus have $\frac{E_S-E_{sim}}{E_L-E{sim}}=\frac{2}{1.25}=1.6$.
\end{enumerate}

In Appendix \ref{Cal}, we use these data to compute execution probabilities for (i) the myopic scenario, where orders are simply sent simultaneously to the exchanges (ii) the model of Section \ref{Delay}, where traders strategically delay the sending of orders and (iii) for the model of Section \ref{Imp}, where traders can specify the execution times of their orders. First, we find that only 96\% of trades are subject to latency arbitrage, when orders are simply sent simultaneously by investors. Second, for the model of optimal order delay of Section \ref{Delay} and Appendix \ref{normaldelay}, we find that a large majority, i.e. roughly 98\% of trades, are first revealed in Chicago. This finding is in line with the empirical evidence provided in \citet{Bud15}.\footnote{\citet{Bud15}, p. 1569, who find that ``[t]he majority (88.56 percent) of arbitrage opportunities in our data set are initiated by a price change in ES [Chicago], with the remaining 11.44 percent initiated by a price change in SPY [NYSE]."} Moreover, the incentive to skew early executions to the large exchange is so strong that the probability of simultaneous execution is only 1\%, i.e. even lower than the 4\% of simultaneous executions that we observed in the scenario where orders were not strategically delayed. Finally, once traders can use the order type proposed in Section \ref{Imp}, they can, e.g. set T=150 ms, to ensure that over 99\% of trades are executed simultaneously. Put differently, given that the mean latency from from Albany to Chicago is 103 ms, a mean delay of 0.047 seconds ensures that latency arbitrage is no-longer possible.

\paragraph{Technical Feasibility:} the order type proposed here, requires that both exchanges use reasonably precise clocks. Regarding time measurement, we not that recent MIFID II regulation, \citet{Euc16}, requires that all financial markets transactions within the EU, which are related to high-frequency trading, are recorded with a precision of at least 100 microseconds, i.e. $0.1$ ms. That is, there would be a maximum time span of $\pm 0.2ms$, with which orders would be placed on two separate exchanges once the order type proposed here is used. This error is an order of magnitude smaller than the time frames that any HFT\footnote{Sending a message from Chicago to New York 3.66 ms, if it travels at the speed of light. That is, possible time measurement errors $\xi$ within $\pm .1$ ms, which would change execution times from exactly $T$ to $T\pm \xi$, would not change our analysis. Indeed time frames of $0.1$ ms are still small in the context of European markets, where geographic distances are smaller. That is, it takes light roughly 1 ms to travel from London to Frankfurt.} could exploit.


\section{Conclusion}\label{Con}
We propose an order type, which allows traders to specify the time at which their orders are executed after reaching the exchange. Using this order type, traders can synchronize order executions across exchanges such that HFTs can no-longer engage in latency arbitrage. Put differently, the order type proposed here allows large traders to place orders such that the law of one price holds, even at ``high-frequency time horizons."

Earlier proposals in the literature, aimed at reducing latency arbitrage, require taxes on financial transactions, or place restrictions on the speed at which trades are executed, orders placed, or prices quoted. The present paper thus offers an alternative perspective, where a relaxation, rather than a tightening, of constraints helps traders, and thus the entire market, to avoid the cost of latency arbitrage. 
\newpage
\appendix

\section{Linear Model}\label{A1}

We study a linear demand system

\begin{eqnarray} P_S=a-bX_S, \quad a,b>0 \label{S} \\
P_L=c-dX_L, \quad c,d>0\label{L}\\
X_S+X_L=\bar{X}\\
P_S=P_L=P_0,  \end{eqnarray}
which rewrites as:
\begin{eqnarray} X_S=\frac{a+d\bar{X}-c}{b+d} \quad X_L=\frac{-a+b\bar{X}+c}{b+d} \nonumber\\
P_S=a-b\frac{a+d\bar{X}-c}{b+d}\nonumber\\
P_L=c-d\frac{-a+b\bar{X}+c}{b+d}. \nonumber\end{eqnarray}

If a trader buys a number of shares $\tilde{X}$, we obtain new (long-run) prices and quantities
\begin{eqnarray} X_S=\frac{a+d(\bar{X}-\tilde{X})-c}{b+d}, \quad X_L=\frac{-a+b(\bar{X}-\tilde{X})+c}{b+d}\nonumber\\
P_S=a-b\frac{a+d(\bar{X}-\tilde{X})-c}{b+d}\nonumber\\
P_L=c-d\frac{-a+b(\bar{X}-\tilde{X})+c}{b+d}. \nonumber\end{eqnarray}

\paragraph{Strategy of HFT and the Limit Order Book}

For the current demand/supply system we have:
\begin{eqnarray} \frac{dX_S}{dP_S}=-\frac{1}{b}, \quad \frac{dX_L}{dP_L}=-\frac{1}{d}\nonumber\end{eqnarray}
the cost of purchasing a quantity of shares $\tilde{X}$ is thus:
\begin{eqnarray} \tilde{X}=-\int_{P_0}^{P^*}\frac{dX_S}{dP_S}dP_S-\int_{P_0}^{P^*}\frac{dX_L}{dP_L}dP_L=(P^*-P_0)(\frac{1}{b}+\frac{1}{d}) \nonumber \end{eqnarray}
Expenditures, in the case of simultaneous execution, are:
\begin{eqnarray} \tilde{E}_{sim}=-\int_{P_0}^{P^*}P_S\frac{dX_S}{dP_S}dP_S-\int_{P_0}^{P^*}P_L\frac{dX_L}{dP_L}dP_L=\frac{1}{2}(P^{*2}-P^2_0)(\frac{1}{b}+\frac{1}{d})\nonumber  \end{eqnarray}
Expenditures, in the case where the trade is revealed on exchange $L$, are:
\begin{eqnarray} \tilde{E}_L=-P^*\int_{P_0}^{P^*}\frac{dX_S}{dP_S}dP_S-\int_{P_0}^{P^*}P_L\frac{dX_L}{dP_L}dP_L=P^{*2}\frac{1}{b}-P^*P_0\frac{1}{b}+\frac{1}{2}(P^{*2}-P_0^2)\frac{1}{d}\nonumber\end{eqnarray}
Expenditures, in the case where the trade is revealed on exchange $L$, are:
\begin{eqnarray} \tilde{E}_S=-\int_{P_0}^{P^*}\frac{dX_S}{dP_S}dP_S-P^*\int_{P_0}^{P^*}\frac{dX_L}{dP_L}dP_L=
P^{*2}\frac{1}{d}-P^*P_0\frac{1}{d}+\frac{1}{2}(P^{*2}-P_0^2)\frac{1}{b}\nonumber\end{eqnarray}
Hence we have
\begin{eqnarray} E_L-E_{sim}=\frac{1}{2b}(P^*-P_0)^2>0\nonumber\\
E_S-E_{sim}=\frac{1}{2d}(P^*-P_0)^2>0\nonumber\\
\frac{E_S-E_{sim}}{E_L-E_{sim}}=\frac{b}{d}\label{Ratio}\nonumber
\end{eqnarray}
Taking into account the \citet{CME16}, p.3, estimate that a
purchase worth 100 Million in the SPY increases prices by $\Delta
P_L=1.25$ BP and $\Delta P_S=2$ BP respectively. Moreover, for
demands (\ref{S}) and (\ref{L}) we have $d=\frac{\Delta
P_L}{\Delta X_L}$ as well as $b=\frac{\Delta P_S}{\Delta X_S}$.
Hence, we can recall (\ref{Ratio}) to obtain
$\frac{E_S-E_{sim}}{E_L-E_{sim}}=\frac{b}{d}\approx
\frac{2}{1.25}=1.6$.

\section{Normally Distributed Latency}\label{normaldelay}
Delivery times for normally distributed network noise are:
\begin{eqnarray}
l_S=\mu_S+\delta+\sigma_S\xi \quad \xi\sim\mathcal{N}(0,1)\\
l_L=\mu_L+\sigma_L\varepsilon \quad \varepsilon\sim\mathcal{N}(0,1)
\nonumber\end{eqnarray}
The HFT's order delivery time is $H>0$. To work with these
latencies, we define $x:=l_S-l_L$ and
$\gamma:=E[l_S-l_L]=\mu_S-\mu_L+\delta$. Moreover, we assume that
latencies are uncorrelated such that
$\alpha:=\frac{1}{Var(l_S-l_L)}=\frac{1}{\sigma_S^2+\sigma_L^2}$.
We denote the cumulative standard normal distribution function by
$\Phi()$ and it's derivative by $\phi()$. The probabilities of
early revelation on exchanges $L,S$ as well as the probability of
simultaneous execution are thus:
\begin{eqnarray}
\pi_{L}=P(l_S-l_L>H)=1-\Phi(\sqrt{\alpha}(-\gamma+H))\label{Normal1}\\
\pi_{S}=P(l_S-l_L<-H)=\Phi(\sqrt{\alpha}(-\gamma-H))\label{Normal2}\\
\pi_{sim}=P(|l_S-l_L|<H)=1-\pi_L-\pi_S\label{Normal3}
\end{eqnarray}
Given (\ref{Normal1}) and (\ref{Normal2}), the first order condition for optimal order delay (\ref{FOCb}) can be rewritten as:
\begin{eqnarray} \phi(\sqrt{\alpha}(-\gamma+H))(E_L-E_{sim})=\phi(\sqrt{\alpha}(-\gamma-H))(E_{S}-E_{sim}). \label{FOCbnormal} \end{eqnarray}
Recalling $\phi(z)=\frac{1}{\sqrt{2\pi}}e^{-\frac{z^2}{2}}$, we solve (\ref{FOCbnormal}):
\begin{lemma} Orders to the small exchange are delayed such that $\delta^*=\mu_L-\mu_S+\frac{\sigma_S^2+\sigma_L^2}{2H}ln(\frac{E_S-E_{sim}}{E_L-E_{sim}}),$ and $\gamma^*(\delta^*)=\frac{\sigma_S^2+\sigma_L^2}{2H}ln(\frac{E_S-E_{sim}}{E_L-E_{sim}})>0$, and $\pi_L(\delta^*)>\pi_S(\delta^*)$. \label{L1}\end{lemma}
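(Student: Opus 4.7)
The plan is to reduce the first-order condition (\ref{FOCbnormal}) to a linear equation in $\gamma$ by taking logarithms, and then to read off the sign of $\gamma^*$ and the probability comparison from Lemma \ref{Expenditurelemma} together with the monotonicity and symmetry of $\Phi$.

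First, I would substitute $\phi(z)=\frac{1}{\sqrt{2\pi}}e^{-z^2/2}$ into (\ref{FOCbnormal}), cancel the common prefactor $1/\sqrt{2\pi}$, and take natural logarithms of both sides. This produces
\begin{eqnarray*} -\frac{\alpha}{2}(-\gamma+H)^2 + \ln(E_L-E_{sim}) \;=\; -\frac{\alpha}{2}(-\gamma-H)^2 + \ln(E_S-E_{sim}). \end{eqnarray*}
The $\gamma^2$ and $H^2$ pieces cancel via the identity $(-\gamma-H)^2-(-\gamma+H)^2 = 4\gamma H$, leaving the linear relation $2\alpha H\gamma = \ln\frac{E_S-E_{sim}}{E_L-E_{sim}}$. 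Solving for $\gamma$ and inserting $\alpha^{-1}=\sigma_S^2+\sigma_L^2$ yields $\gamma^*$; recalling $\gamma=\mu_S-\mu_L+\delta$ then produces the closed form for $\delta^*$ stated in the lemma.

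The sign claim $\gamma^*>0$ follows at once from Lemma \ref{Expenditurelemma}: since $E_S>E_L>E_{sim}$, the ratio $(E_S-E_{sim})/(E_L-E_{sim})$ exceeds one and its logarithm is strictly positive. For the final inequality $\pi_L(\delta^*)>\pi_S(\delta^*)$, I would use a symmetry-plus-monotonicity argument on (\ref{Normal1})--(\ref{Normal2}). At $\gamma=0$, the standard-normal identity $\Phi(-\sqrt{\alpha}H)=1-\Phi(\sqrt{\alpha}H)$ forces $\pi_L=\pi_S$. Shifting to $\gamma=\gamma^*>0$ strictly decreases both arguments $\sqrt{\alpha}(-\gamma+H)$ and $\sqrt{\alpha}(-\gamma-H)$, so by strict monotonicity of $\Phi$, $\pi_L$ strictly increases while $\pi_S$ strictly decreases relative to the symmetric case, establishing $\pi_L(\delta^*)>\pi_S(\delta^*)$.

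There is no genuine obstacle: the lemma is essentially a direct computation once (\ref{FOCbnormal}) is available, with the sign and ranking being immediate consequences of the expenditure ordering already proved. The only point I would note briefly is that the critical point identified is genuinely a minimum of (\ref{Delay1}) rather than a maximum or saddle; this can be verified either by checking the second-order condition directly (the quadratic in $\gamma$ that appears after taking logs has positive leading coefficient $\alpha H$) or, more conceptually, by observing that $\pi_L$ and $\pi_S$ are oppositely monotone in $\delta$, so the weighted objective in (\ref{Delay1}) is single-troughed along the $\delta$-axis.
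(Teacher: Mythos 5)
Your derivation is correct and follows essentially the same route as the paper: substitute the normal density into (\ref{FOCbnormal}), take logarithms so the quadratic terms cancel via $(-\gamma-H)^2-(-\gamma+H)^2=4\gamma H$, solve the resulting linear relation for $\gamma^*$ and hence $\delta^*$, and read off $\gamma^*>0$ and $\pi_L(\delta^*)>\pi_S(\delta^*)$ from Lemma \ref{Expenditurelemma} together with the monotonicity and symmetry of $\Phi$ (the paper itself states the lemma directly after ``we solve (\ref{FOCbnormal})'' with no further argument). One small imprecision: after taking logs the condition is linear, not quadratic, in $\gamma$ with slope $2\alpha H>0$; this sign-flip of the objective's derivative at $\gamma^*$ is exactly what certifies a minimum, whereas the purely ``oppositely monotone, hence single-troughed'' remark would not by itself be rigorous.
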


\section{Early Execution Probabilities}\label{Cal}

\emph{Orders without strategic delay:} without delay, i.e. $\delta=0$ such that $\gamma=\mu_S-\mu_L+\delta=-52$, we have $\pi_S\approx\Phi(\frac{48}{38})\approx 0.89,\pi_L= 1-\Phi(-\frac{56}{38})\approx 0.07$ and $\pi_{sim}\approx 0.04$.

\emph{Orders with strategic delay:} recalling Appendix \ref{normaldelay}, we have optimal delay $\delta^*=\mu_L-\mu_S+\frac{\sigma_S^2+\sigma_L^2}{2H}ln(\frac{E_S-E_{sim}}{E_L-E_{sim}}),$ and $\gamma^*(\delta^*)=\frac{\sigma_S^2+\sigma_L^2}{2H}ln(\frac{E_S-E_{sim}}{E_L-E_{sim}})\approx \frac{1444}{8}ln(1.6)\approx 84.8$. Recalling that $\alpha=\frac{1}{Var(l_S-l_L)}=\frac{1}{\sigma_S^2+\sigma_L^2}$, we have $\sqrt{\alpha}=\sqrt{\frac{1}{28^2+25.7^2}}\approx \frac{1}{38}$. Substitution into (\ref{Normal1})-(\ref{Normal3}) yields $\pi_S\approx\Phi(\frac{-88.8}{38})\approx 0.01,\pi_L\approx1-\Phi(\frac{-80.8}{38})\approx 0.98$ and thus $\pi_{sim}\approx 0.01$.

\emph{Orders with strategic placement time:} recalling that $\pi_{sim}\geq P(l_S\leq T+H)P(l_L\leq T+H)+P(|l_S-l_L|\leq H)(1-P(l_S\leq T+H)P(l_L\leq T+H))=\Phi(\frac{103}{5.2})\Phi(\frac{51}{5})+0.04(1-0.99)\approx 0.99$. Hence choosing, e.g. $T=150$ ms, yields $\pi_{sim}\approx 0.99$, i.e. effectively ensures simultaneous execution of orders.

\newpage

\addcontentsline{toc}{section}{References}
\markboth{References}{References}
\bibliographystyle{apalike}
\bibliography{References}

\end{document}